\theoremstyle{remark}
\newtheorem{theorem}{\bf \emph{Theorem}}
\newtheorem{lemma}{\bf \emph{Lemma}}
\newtheorem{remark}{\textit{Remark}}
 \def\cF{{\mathcal{F}}}  
   \def\cL{{\mathcal{L}}}
\def\ba{{\mathbf{a}}} \def\bb{{\mathbf{b}}}  \def\bd{{\mathbf{d}}} \def\be{{\mathbf{e}}} 
\def\bff{{\mathbf{f}}} \def\bg{{\mathbf{g}}}   
   \def\bn{{\mathbf{n}}} 
   \def\bs{{\mathbf{s}}} \def\bt{{\mathbf{t}}}
   \def\bx{{\mathbf{x}}} \def\by{{\mathbf{y}}}
\def\bA{{\mathbf{A}}} \def\bB{{\mathbf{B}}} \def\bC{{\mathbf{C}}} \def\bD{{\mathbf{D}}} 
\def\bF{{\mathbf{F}}} \def\bG{{\mathbf{G}}} \def\bH{{\mathbf{H}}} \def\bI{{\mathbf{I}}}
  \def\bW{{\mathbf{W}}} \def\bX{{\mathbf{X}}}
  \def\R{{\mathbb{R}}} \def\C{{\mathbb{C}}}
\def\blockdiag{\mathop{\mathrm{blk}}}
\def\argmin{\mathop{\mathrm{argmin}}}
\def\mod{\mathop{\mathrm{mod}}}
     \def\d4{\!\!\!\!}
         \def\ls{\left\{}
  \def\-{\! - \!}  \def\+{\! + \!}  \def\={\! = \!}  \def\>{\! > \!}
\newcommand{\bef}{\begin{figure}}
\newcommand{\eef}{\end{figure}}
\newcommand{\beq}{\begin{eqnarray}}
\newcommand{\eeq}{\end{eqnarray}}
\def\BibTeX{{\rm B\kern-.05em{\sc i\kern-.025em b}\kern-.08em
		T\kern-.1667em\lower.7ex\hbox{E}\kern-.125emX}}
\begin{document}
\setlength{\textfloatsep}{0.5\baselineskip plus 0.2\baselineskip minus 0.5\baselineskip}
	\title{Joint Delay and Phase Precoding Under True-Time Delay {Constraints} for THz Massive MIMO}
	\author{\IEEEauthorblockN{Dang Qua Nguyen and Taejoon Kim}
	\IEEEauthorblockA{\textit{Department of Electrical Engineering and Computer Science} \\
		\textit{University of Kansas, Lawrence, KS 66045 USA}\\
		Email: \{quand, taejoonkim\}@ku.edu}
		\thanks{This work was supported in part by the National Science Foundation (NSF) under Grants CNS1955561, and in part by the Office of Naval Research (ONR) under Grant N00014-21-1-2472.}
}
\maketitle	

	\begin{abstract}
    A new approach is presented to the problem of compensating the beam squint effect arising in wideband terahertz (THz) hybrid massive multiple-input multiple-output (MIMO) systems, based on the joint optimization of the phase shifter (PS) and true-time delay (TTD) values under per-TTD device time delay constraints.  
    Unlike the prior approaches, the new approach does not require the unbounded time delay assumption; the range of time delay values that a TTD device can produce is strictly limited in our approach. 
    Instead of focusing on the design of TTD values, we jointly optimize both the TTD and PS values to effectively cope with the practical time delay constraints.
    Simulation results that illustrate the performance benefits of the new method for the beam squint compensation are presented.  Through simulations and analysis, we show that our approach is a generalization of the prior TTD-based precoding approaches.  
    \end{abstract}
	\begin{IEEEkeywords}
		Wideband THz massive MIMO, beam squint effect, hybrid precoding, true-time delay.
	\end{IEEEkeywords}
	\section{Introduction}
	\label{secI}
        Communications in the terahertz (THz) band (0.1-10 THz) have recently attracted significant interests from academia and industry due to the availability of the tens or hundreds of gigahertz bandwidth\cite{Han2014}.
        In THz communications, data rates on the orders of 10 to 100 Gbps are achievable using the currently available digital modulation techniques. 
        To deal with the high path losses, power consumption, and inter-symbol interference found in the THz communication channels, the combination of hybrid massive multiple-input multiple-output (MIMO) and orthogonal frequency division multiplexing (OFDM) technologies has been popularly discussed recently. 

        It is well known that as the number of antennas grows, the system can be greatly simplified in terms of beamforming and precoding complexity \cite{Marzetta2010}.  
        This has engendered a significant interest in massive MIMO systems at sub-$6$ GHz \cite{Marzetta2010,Rusek2013, Larsson14} and  millimeter-wave (mmWave) frequencies \cite{ayach2014,Alkhateeb2014,Hadi2016}.
        The underlying assumption behind several strong results \cite{Marzetta2010,Rusek2013,Larsson14,ayach2014,Alkhateeb2014,Hadi2016} was narrowband. 
        Unlike the narrowband systems, wideband THz OFDM systems suffer from the array gain loss across different subcarriers as the number of antennas grows due to beam squint \cite{Han2021,Cai2016,Wang2019}.  
       Beam squint refers to a phenomenon in which deviation occurs in the spatial direction of each OFDM subcarrier when wideband OFDM is used in a very large antenna array system. 
        This causes a sizable array gain loss, which potentially demotivates the use of OFDM in THz massive MIMO. 
        

        \subsection{Related Works}
        To deal with the beam squint effect, beam broadening techniques have been previously studied in mmWave massive MIMO systems \cite{Cai2016,Liu2019}.
        While they show efficacy in the mmWave bands, these techniques cannot be directly extended to THz due to the extremely narrow pencil beam requirements imposed by one or two orders of magnitude higher carrier frequencies. 
        Traditionally, the beam squint effect has been independently studied in the radar community (e.g., \cite{Mailloux2017, Longbrake2012}, and references therein).
        These traditional works \cite{Mailloux2017, Longbrake2012} proposed to employ true-time delay (TTD) lines to relieve the beam squint effect.
        Recently, the TTD methods have been proposed for THz hybrid massive MIMO-OFDM systems by introducing delay-phase subarray architectures \cite{tan2019,Gao2021,Matthaiou2021}.
        Most of these prior works have focused on (i) the design of TTD values with fixed  phase shifter (PS) values and (ii) the assumption that the TTD values could increase linearly with the number of antennas without bounds.
        However, it is practically difficult to implement these approaches. First, the unbounded TTD assumption cannot be realized in practice due to the inevitable limitation of a TTD device. The range of time delay values that a TTD device can produce is strictly limited (e.g., $\leq 508$ ps \cite{Cho2018}). Second, to cope with the practical TTD constraints, it is much desirable to jointly optimize both the TTD and PS values to compensate the beam squint. 
         
        \subsection{Overview of Methodology and Contributions}
        In this paper, we present an approach to the problem of beam squint compensation in wideband THz hybrid massive MIMO-OFDM systems, based on the joint optimization of the PS and TTD values subject to per-TTD device time delay constraints. 
        First, assuming a  fully-digital array, we characterize the optimal unconstrained analog precoder that completely compensates the beam squint effect.   
        Then, the problem is formulated as a joint PS and TTD optimization problem that minimizes the distance between the {optimal unconstrained analog precoder} and the product of PS and TTD precoders.
         Although the formulated problem is non-convex and thus difficult to solve directly, we show that by transforming the problem into the phase domain, the original problem is converted to an equivalent convex problem, which allows us to find a closed-form of the global optimal solution.
         Our analysis reveals the amount of time delay and the number of transmit antennas required given an acceptable level of beam squint compensation. 
        
    \textit{Notation:} A bold lower case letter $\bx$ is a column vector and a bold upper case letter $\bX$ is a matrix. 
    $\bX^T$, $\bX^H$, $\|\bX\|_F$, $X(i,j)$, and $|x|$  are, respectively, the transpose, conjugate transpose, Frobenius norm, $i$th row and $j$th column entry of $\bX$, and the modulus of $x\in \C$. 
    $\blockdiag[\bx_1, \bx_2, \dots, \bx_N]$ is an $nN \times N$ block diagonal matrix such that its main-diagonal blocks contain $\bx_i \in \R^n$, for $i =1,2,\dots,N$ and all off-diagonal blocks are zero.
    $\cF_{m,n}$ denotes the set of all $\bX \in \C^{m \times n}$ such that $|X(i,j)| = \frac{1}{\sqrt{m}}, \forall  i, j$. 
    $\mathbf{0}_n$, $\mathbf{1}_{n}$, and $\bI_{n}$ denote, respectively, the $n\times 1$ all-zero vector, $n\times 1$ all-one vector, and $n\times n$ identity matrix.
    Given $\bx \in \R^n$, $e^{j\bx}$ denotes $[e^{jx_1}, e^{jx_2}, \dots, e^{jx_n}]^T \in \C^n$.
    \section{Channel Model and Motivation}
    In this section, we describe the channel model and the beam squint effect of large antenna array systems.
    \subsection{Channel Model}
    We consider the downlink of a THz hybrid massive MIMO-OFDM system where the base station is equipped with an $N_t$-element uniform linear array (ULA) with element spacing $d$. 
    The ULA is fed by $N_{RF}$ radio frequency (RF) chains to simultaneously transmit $N_s$ data streams to an $N_r$-antenna user.
    The dimensions of $N_t$, $N_r$, $N_s$, and $N_{RF}$ satisfy $N_r = N_s \leq N_{RF} \ll N_t$. 
    Herein, $f_c$ and $B$ denote the central (carrier) frequency and bandwidth of the OFDM system, respectively. 
    We let $K$ be the number of subcarriers and an odd number. Then, the $k$th subcarrier frequency is given by $f_k = f_c + \frac{B}{K}(k-1-\frac{K-1}{2}).$
    The frequency domain MIMO-OFDM channel at the $k$th subcarrier is represented by 
    \begin{equation}
    \label{eq1}
    \bH_k = \sqrt{\frac{N_tN_r}{L}} \sum_{l=1}^{L}\alpha_l e^{-j 2 \pi \tau_l f_k}\bff(N_t,\psi_{k,l})\bff^H(N_r,\phi_{k,l}), 
    \end{equation}                  
    where $L$ denotes the number of spatial paths, and $\psi_{k,l} = 2d\frac{f_k}{\nu}\sin({\Psi_l})$ and $\phi_{k,l} = 2d\frac{f_k}{\nu}\sin({\varphi_l})$ are the spatial directions of the $k$th subcarrier on the $l$th path at the transmitter and receiver, respectively, where $\nu = 3\times 10^8$ m/s is the speed of light, and $\Psi_{l} \in [-\frac{\pi}{2}, \frac{\pi}{2}]$ and $\varphi_{l} \in [-\frac{\pi}{2}, \frac{\pi}{2}]$ are the angel-of-departure (AoD) and angel-of-arrival (AoA) of the $l$th path, respectively. 
    The $\alpha_l \in \C$ and $\tau_l \in \R$ in \eqref{eq1} represent the gain and delay of the $l$th path, respectively, and $\bff(N,\psi) =\frac{1}{\sqrt{N}} [1,e^{-j\pi\psi},\dots,e^{-j\pi(N-1)\psi}]^T$ is the array response vector of an $N$-element ULA at the direction $\psi$. 

    Assuming $d = \frac{\nu}{2 f_c}$, the spatial directions at the central frequency $f_c$ are simplified to $\psi_{c,l} = \sin(\Psi_{l})$ and $\phi_{c,l} = \sin(\varphi_l)$, $\forall l$. 
    Hence, setting $\zeta_k = \frac{f_k}{f_c}$ leads to $\psi_{k,l} = \zeta_k \psi_{c,l}$ and $\phi_{k,l} = \zeta_k \phi_{c,l}$, $\forall k,l$. 
    For ease of exposition, we assume that $L=N_{RF}$, which is equivalent to setting $\alpha_l =0$, $\forall l > L$ in (1) whenever $L<N_{RF}$. 

    \subsection{Motivation}
    \label{beamsquint}
     As aforementioned in \emph{Section}~\ref{secI}, when wideband OFDM is employed in a massive MIMO system, a substantial array gain loss at each subcarrier could occur due to beam squint. To describe the beam squint effect, we consider the $l$th path of the channel in \eqref{eq1}. We denote the beam forming vector matched to the array response vector with the AoD $\Psi_l$ as $\bff_l = \bff(N_t,\psi_{c,l})$. 
    The array gain at the $k$th subcarrier of the $l$th path is then given by $    g(\bff_l,\psi_{k,l}) = |\bff^H(N_t,\psi_{k,l})\bff_l|$, i.e.,    
    \begin{equation}
    \label{eq2}
         g(\bff_l,\psi_{k,l}) \!=\! \frac{1}{N_t}\bigg|\!\sum\limits_{n=0}^{N_t-1}\!\!e^{jn\pi(\psi_{k,l}-\psi_{c,l})} \bigg| \!=\! \left|\frac{\sin \left( N_t\Delta_{k,l}\right)}{N_t\sin \left(\Delta_{k,l}\right)}\!\right|\!,
    \end{equation}
    where $\Delta_{k,l} = \frac{\pi}{2}(\psi_{k,l}-\psi_{c,l})$. 
     It is not difficult to observe that at the central frequency, the  array gain is $g(\bff_l,\psi_{c,l}) = 1$, because $\lim_{x \rightarrow 0} \frac{\sin (N_tx)}{\sin (x)} = N_t$. 
    However, when $f_k \neq f_c$,  $\Delta_{k,l}$ deviates from zero. 
    As a result, any non-central subcarrier suffers from the array gain loss. 
    The implication of beam squint in the spatial domain is that the beams at non-central subcarriers may completely squint.
    
    The beam squint effect becomes severe either when the bandwidth B increases or when $N_t$ grows. The following lemma quantifies the asymptotic array gain loss as $N_t \rightarrow \infty$.      
    \begin{lemma}
    \label{lm1} 
    Suppose that $\psi_{k,l}$ is the spatial direction at the $k$th subcarrier ($f_k \neq f_c$) of the $l$th path. 
    Then, the array gain in \eqref{eq2} converges to $0$ as $N_t$ tends to infinity, i.e., $\lim_{N_t \rightarrow \infty} g(\bff_l, \psi_{k,l}) = 0.$
    \end{lemma}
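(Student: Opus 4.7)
The plan is to obtain the limit directly from the explicit closed form in \eqref{eq2} by bounding the numerator and keeping the denominator bounded away from zero. The first observation I would make is that $\Delta_{k,l}$ is independent of $N_t$. Indeed, from the definition and from $\psi_{k,l} = \zeta_k\psi_{c,l}$ established earlier in Section II-A, we have $\Delta_{k,l} = \frac{\pi}{2}(\zeta_k-1)\psi_{c,l}$, which depends only on the frequency ratio $\zeta_k = f_k/f_c$ and on the AoD, not on the array size.

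Second, since by hypothesis $f_k \neq f_c$, we have $\zeta_k \neq 1$, so provided the path is not aligned with broadside (i.e., $\psi_{c,l} \neq 0$, which is the regime in which beam squint actually occurs), $\Delta_{k,l}$ is a fixed non-zero constant with $|\Delta_{k,l}| < \pi$, and consequently $|\sin(\Delta_{k,l})|$ is a strictly positive constant independent of $N_t$. Combining this with the trivial bound $|\sin(N_t\Delta_{k,l})| \leq 1$ in \eqref{eq2} yields
\begin{equation*}
g(\bff_l,\psi_{k,l}) \;=\; \left|\frac{\sin(N_t\Delta_{k,l})}{N_t\sin(\Delta_{k,l})}\right| \;\leq\; \frac{1}{N_t\,|\sin(\Delta_{k,l})|}.
\end{equation*}
Taking $N_t \to \infty$ on both sides gives the claim. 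Since the left-hand side is also non-negative, the squeeze theorem delivers $\lim_{N_t\to\infty} g(\bff_l,\psi_{k,l}) = 0$.

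There is essentially no hard step in this argument; the only subtlety worth flagging is the degenerate broadside case $\psi_{c,l} = 0$, in which $\psi_{k,l} = \zeta_k\psi_{c,l} = 0$ as well, so $\Delta_{k,l} \equiv 0$ and the gain remains $1$ for every $N_t$. This case carries no beam-squint loss and is not the regime the lemma targets; the statement should be read as applying whenever $\Delta_{k,l}\neq 0$, which is the generic off-broadside situation underlying the motivation in Section~\ref{beamsquint}. Once that assumption is in place, the simple $1/N_t$ envelope above is exactly the quantitative decay rate that motivates the need for TTD-based compensation discussed in the sequel.
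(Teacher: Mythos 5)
Your proof is correct, and it actually takes a cleaner route than the paper does. Both arguments start from the closed form in \eqref{eq2}, but the key step differs: the paper factors the gain as $\frac{1}{N_t}\big|\frac{\sin(N_t\Delta_{k,l})}{\pi\Delta_{k,l}}\cdot\frac{\pi\Delta_{k,l}}{\sin(\Delta_{k,l})}\big|$ and then invokes the ``Dirac delta'' identity $\lim_{N_t\to\infty}\frac{\sin(N_t\Delta)}{\pi\Delta}=\delta(\Delta)$, whereas you simply bound $|\sin(N_t\Delta_{k,l})|\le 1$ and let the $N_t$ in the denominator force the limit via the squeeze theorem. Your version is preferable: the delta-function statement holds only in the sense of distributions, and as a pointwise assertion for fixed $\Delta_{k,l}\neq 0$ the quantity $\frac{\sin(N_t\Delta_{k,l})}{\pi\Delta_{k,l}}$ does not converge at all (it oscillates), so the paper's proof is really relying on exactly the boundedness you make explicit; your argument also supplies the quantitative $O(1/N_t)$ envelope for free. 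Two further points in your favor: you correctly observe that $\Delta_{k,l}=\frac{\pi}{2}(\zeta_k-1)\psi_{c,l}$ is independent of $N_t$ (a fact the limit silently requires), and you flag the broadside case $\psi_{c,l}=0$, where $\Delta_{k,l}=0$ despite $f_k\neq f_c$ and the gain stays at $1$ --- the paper's claim that ``$\Delta_{k,l}\neq 0$ because $f_k\neq f_c$'' is not literally true there, so the lemma implicitly assumes $\psi_{c,l}\neq 0$, exactly as you state. The only cosmetic caveat is that your bound needs $\sin(\Delta_{k,l})\neq 0$, which you justify by $0<|\Delta_{k,l}|<\pi$; that holds in the paper's regime since $|\zeta_k-1|\le\frac{B}{2f_c}<1$ and $|\psi_{c,l}|\le 1$, and it would be worth one clause to say so.
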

    \begin{proof}
    The  array gain in \eqref{eq2} can be rewritten as
      $ g(\bff_l,\psi_{k,l}) = \frac{1}{N_t} \left|\frac{\sin (N_t\Delta_{k,l})}{\pi\Delta_{k,l}} \frac{\pi \Delta_{k,l}}{\sin (\Delta_{k,l})}\right|,$ 
    where $\Delta_{k,l} \neq 0$ because $f_k \neq f_c$. 
    Then the lemma follows from the definition of the Dirac delta function $\lim_{N_t \rightarrow \infty}\frac{\sin (N_t\Delta_{k,l})}{\pi\Delta_{k,l}} = \delta(\Delta_{k,l})$ \cite{bracewell2000}, which completes the proof.
    \end{proof}
     \begin{figure}[ht]
        \centering
        \includegraphics[width = 0.4\textwidth]{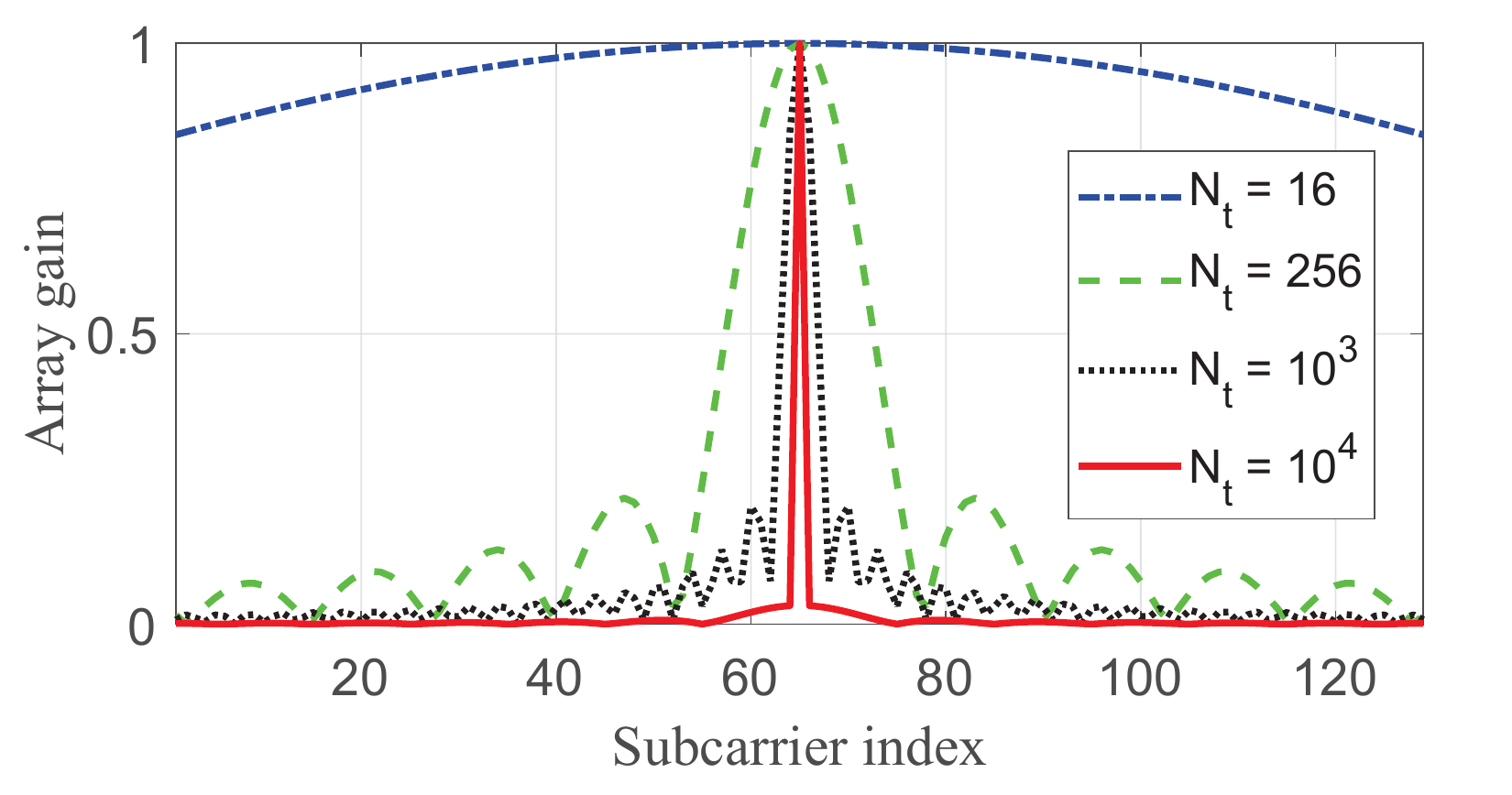}
        \caption{Array gain vs. subcarriers for different numbers of transmit antennas ($N_t$).}
        \label{fig1}
    \end{figure}
    Fig. \ref{fig1} illustrates the convergence trend of \emph{Lemma}~1. The array gain patterns are calculated for $f_c = 300$ GHz, $K = 129$, $\psi_{c,l}= 0.8$, and $B= 30$ GHz. 
    As $N_t$ tends to be large, the maximum array gain is only achieved at the central frequency, i.e., $65$th subcarrier in Fig.\ref{fig1}, while other subcarriers suffer from nearly 100\% array gain losses. 
    This is quite opposite to the traditional narrowband massive MIMO system in which the array gain grows as $N_t \rightarrow \infty$.
    
    Wideband THz massive MIMO research is in its early stages. In order to truly unleash the potential of THz communications, a hybrid precoding mechanism that can effectively compensate for the beam squint effect under practical constraints is of paramount importance.
    \section{Hybrid Precoding Under TTD Contraint}    
    We consider a TTD-based hybrid precoding architecture \cite{tan2019,Gao2021}, where each RF chain drives $M$ TTDs and each TTD is connected to $N = \frac{N_t}{M}$ phase shifters (PSs) as shown in Fig. \ref{fig2}. 
    \begin{figure}[t]
        \centering
        \includegraphics[width=0.5\textwidth]{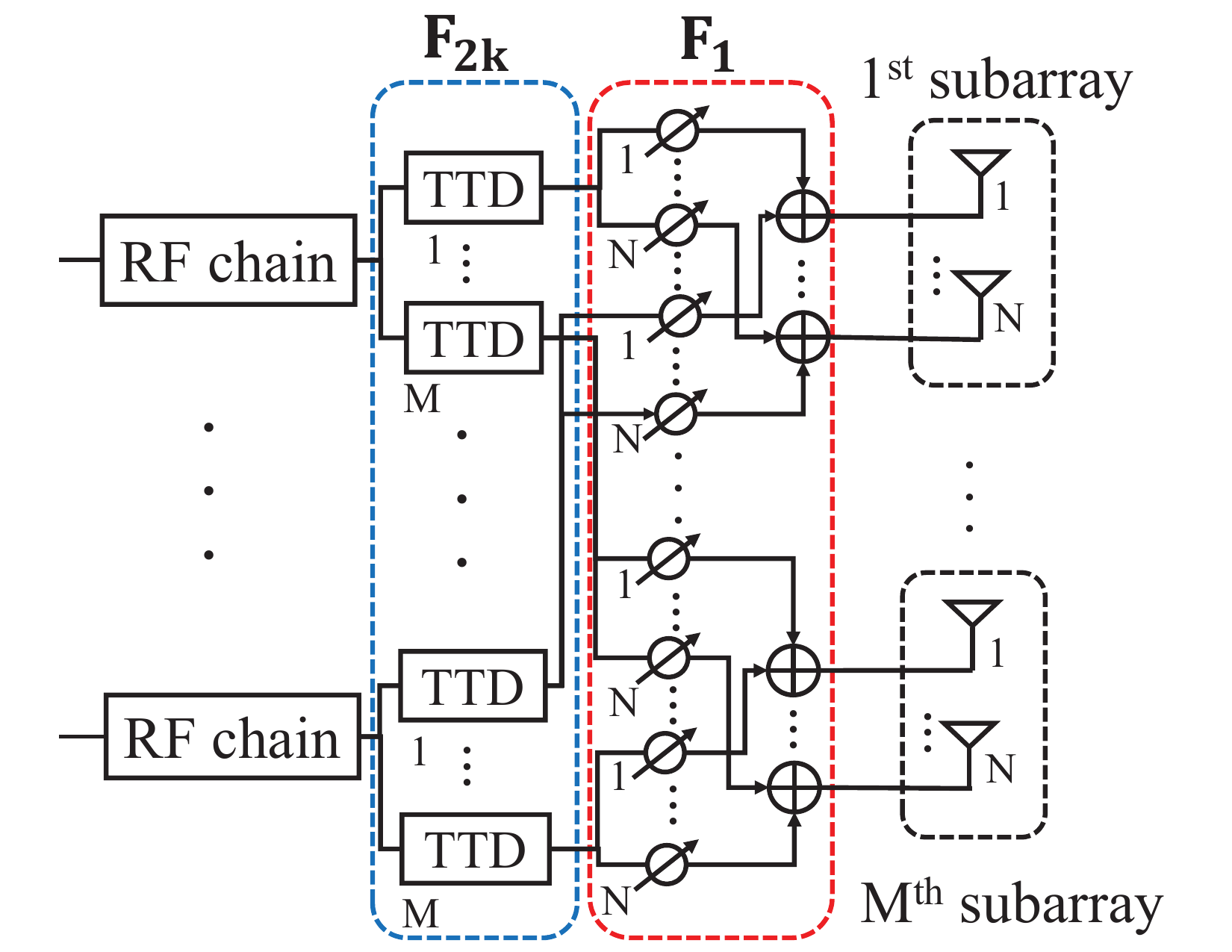}
        \caption{TTD-based hybrid precoding architecture}
        \label{fig2}
    \end{figure}
   The $N_t$-element ULA is divided into $M$ subarrays with $N$ antennas per subarray ($N_t = MN$).      
    A TTD delays the $k$th subcarrier by $t$ ($0\leq t \leq t_{\max}$), which corresponds to the $-2\pi f_k t$ phase rotation. 
    The $t_{\max}$ is the maximum time delay value that a TTD device can produce.    
     The received signal $\by_k \in \C^{N_r}$ at the $k$th subcarrier is then expressed by 
     \begin{equation}
         \label{eq:signalmodel}
         \by_k = \bH^{H}_k\bF_1\bF_{2,k}(\{\bt_l\}^{N_{RF}}_{l=1})\bW_k\bs_k + \bn_k,
     \end{equation}
    where $\bs_k \in \C^{N_s}$, $\bW_k \in \C^{N_{RF} \times N_s}$, and $\bn_k \in \C^{N_r}$ are, respectively, the transmitted data stream, baseband digital precoder, and noise.
    Herein, $\bF_1$=$ \frac{1}{\sqrt{N_t}}$$[\bG_1,\bG_2, \dots,  \bG_{N_{RF}}]$$\in \C^{N_t \times MN_{RF}}$ is the PS precoding matrix, where $\bG_l = \blockdiag[e^{j \pi \bx^{(l)}_1},e^{j \pi \bx^{(l)}_2},\dots,e^{j \pi \bx^{(l)}_{M}}] \in \C^{N_t \times M}$ is the PS submatrix and  $\bx^{(l)}_m = [x^{(l)}_{1,m},x^{(l)}_{2,m},\dots,x^{(l)}_{N,m}]^T\in \R^N$ is the PS vector with the PSs connected to the $m$th TTD on the $l$th RF chain, $\forall l,m$. 
    The $\bF_{2,k}(\{\bt_l\}^{N_{RF}}_{l=1})\! =\! \blockdiag[e^{-j2\pi f_k \bt_1},e^{-j2\pi f_k \bt_2}, \dots, e^{-j2\pi f_k \bt_{N_{RF}}}] \in \C^{MN_{RF} \times N_{RF}}$ is the time delay precoding matrix, where $\bt_l = [t^{(l)}_1,t^{(l)}_2,\dots, t^{(l)}_{M}]^T \in \R^{M}$ is the time delay vector on the $l$th RF chain, $\forall l$.   
   
  \section{Preliminaries}
   In this section, we describe the sign invariance property of the array gain and identify the {optimal unconstrained analog precoder} that will be incorporated in \emph{Section}~\ref{sec4}.   
  \subsection{Sign Invariance of Array Gain}
   Denoting the $l$th column of $\bF_1\bF_{2,k}(\{\bt_l\}^{N_{RF}}_{l=1})$ in \eqref{eq:signalmodel} as $\bg^{(l)}_{k} = \frac{1}{\sqrt{N_t}}\bG_le^{-2\pi f_k \bt_l}$, the array gain associated with $\bg^{(l)}_{k}$ is given, based on \eqref{eq2}, by     
    \begin{equation}
    \label{eq3}
     g(\bg^{(l)}_{k},\psi_{k,l}) = \frac{1}{N_t}\bigg|
       \!\sum\limits_{m=1}^{M}\!\sum\limits_{n=1}^{N}\! e^{j \pi \zeta_{k} \gamma^{(l)}_{n,m}} e^{j \pi  x^{(l)}_{n,m}} e^{-j\pi \zeta_k  \vartheta^{(l)}_{m}}\!\bigg|, \hspace{-0.2cm}
    \end{equation} 
    where 
        $\gamma^{(l)}_{n,m} = ((m-1)N+n-1)\psi_{c,l}$ and $\vartheta^{(l)}_{m} = 2f_ct^{(l)}_m \in [0,\vartheta_{\max}]$ with $\vartheta_{\max} = 2f_c t_{\max}$. The $g(\bg^{(l)}_{k},\psi_{k,l})$ in \eqref{eq3} is invariant to the multiplication of negative signs to $\gamma_{n,m}^{(l)}$, $x_{n,m}^{(l)}$, and $\vartheta^{(l)}_{m}$. 
        To be specific, given $\psi_{c,l} > 0$ (i.e., $\gamma^{(l)}_{n,m} > 0$, $\forall m$, $n$), we denote $\{ x_{m,n}^{(l) \star} \}$ and $\{\vartheta^{(l)\star}_{m}\}$ as the optimal values that maximize $g(\bg^{(l)}_{k},\psi_{k,l})$. 
        Then, it is not difficult to observe that $\{-x^{(l) \star}_{n,m}\}$ and $\{\vartheta_{\max} - \vartheta^{(l)\star}_{m}\}$ also maximize $g(\bg^{(l)}_{k},-\psi_{k,l})$. 
        This is a straightforward property when we simultaneously change the sign of the phase in every term inside the double sums in \eqref{eq3}.
        Leveraging this sign invariant property, in what follows, we assume that  $\psi_{c,l} \geq 0$, $\forall l$.
        This sign invariant property will be exploited when deriving the solution to our optimization problem in \emph{Section}~\ref{sec4} and \emph{Appendix}~\ref{SecondAppendix}.

  \subsection{{Optimal Unconstrained Analog Precoder}}
  
  We identify an {optimal unconstrained analog precoder} that maximizes the array gain of each subcarrier by completely compensating the beam squint effect. 
  The purpose is to provide a reference design.
  To this end, we define $\widetilde{\bF}_k \in \cF_{N_t,{N_{RF}}}$ as an unconstrained analog precoder where $\bff^{(l)}_k \in \cF_{N_t,1}$ denotes the $l$th column of $\widetilde{\bF}_k$. 
   The array gain obtained by $\bff^{(l)}_k$ is given by  $g(\bff^{(l)}_k, \psi_{k,l}) = \big|\bff^H(N_t,\psi_{k,l})\bff^{(l)}_{k}\big|$.
   By the Cauchy-Schwarz inequality,
    $g(\bff^{(l)}_k, \psi_{k,l} ) \leq \left\|\bff(N_t,\psi_{k,l})\right\|_{2} \|\bff^{(l)}_k\|_{2} = 1$, 
   where the equality holds if and only if $\bff^{(l)}_{k} = \bff(N_t,\psi_{k,l})$, resulting in $\widetilde{\bF}^{\star}_k = [\bff(N_t,\psi_{k,1}),\bff(N_t,\psi_{k,2}), \dots,\bff(N_t,\psi_{k,N_{RF}})]$, where the $((m-1)N+n-1)$th row and $l$th column entry of $\widetilde{\bF}^{\star}_k$ is 
    \begin{equation}
        \label{eq4}
        \widetilde{F}^{\star}_k((m-1)N+n,l) =\frac{1}{\sqrt{N_t}} e^{-j\pi\zeta_k\gamma^{(l)}_{n,m}}, \forall k, l,m, n.
    \end{equation} 
    It is noteworthy to point out that the entries in \eqref{eq4} is only realizable when $N_{RF} = N_t$, i.e., each antenna is fed by its dedicated RF chain, which is highly impractical for THz massive MIMO systems. 
    Therefore, we attempt to best approximate the {optimal unconstrained analog precoder} in \eqref{eq4}  as a product of $\bF_1$ and $\bF_{2,k}(\{\bt_{l}\}_{l=1}^{N_{RF}})$.
   \section{Joint Phase and Delay Precoding under TTD {Constraints}}
    \label{sec4}
    Ideally, we wish to find $\{\bt_l\}^{N_{RF}}_{l=1}$ and $\bF_1$  satisfying $\bF_1\bF_{2,k}(\{\bt_l\}^{N_{RF}}_{l=1}) = \widetilde{\bF}^{\star}_k$, $\forall k$, i.e., $e^{j \pi x^{(l)}_{n,m}- j\pi\zeta_k \vartheta^{(l)}_{m}} = e^{-j \pi 
    \zeta_k \gamma^{(l)}_{n,m}}$, $\forall k, l, m,n$. However, given fixed $l$, $m$, and $n$, solving $K$-coupled matrix equations is an ill-posed problem, because PSs only generate fixed phase values (frequency-independent), while TTDs generate frequency-dependent phase values. To overcome this issue, we approach to formulate a problem that optimizes $\bF_1$ and $\{\bt_l\}^{N_{RF}}_{l=1}$ by minimizing the difference between $\bF_1\bF_{2,k}(\{\bt_l\}^{N_{RF}}_{l=1})$ and $\widetilde{\bF}^{\star}_{k}$, $\forall k$:
    \vspace{-0.3cm}
   \begin{subequations}
        \label{opt_prob}
        \beq
\d4\d4        \min_{\bF_1, \{\bt_l\}_{l=1}^{N_{RF}}} \d4&&\d4 \frac{1}{K}\sum_{k=1}^{K} \left\|\widetilde{\bF}^{\star}_k - \bF_1\bF_{2,k}(\{\bt_l\}^{N_{RF}}_{l=1})\right\|^2_F, \label{obj1}\\       
\d4\d4 \text{subject to} \d4&&\d4 0\leq t^{(l)}_{m} \leq t_{max}, \forall l,m,\label{ct1}\\
\d4\d4        \d4&&\d4 |\bF_1(i,j)| = \frac{1}{\sqrt{N_t}},~\text{ if } |\bF_1(i,j)| \neq 0, \forall i,j, \label{ct3}\\ 
\d4\d4        \d4&&\d4 |\bF_{2,k}(p,q)| = 1,~\text{ if } |\bF_{2,k}(p,q)| \neq 0,\forall p,q, \label{ct4}\\ 
\d4\d4        \d4&&\d4 \bF_1\bF_{2,k}(\{\bt_l\}^{N_{RF}}_{l=1}) \in \cF_{N_t, N_{RF}}, \forall k.\label{ct5}
        \eeq
   \end{subequations}
     The constraint \eqref{ct1} indicates the range of time delay values of the per-TTD device. The constraint \eqref{ct5} describes the constant modulus property of the product $\bF_1\bF_{2,k}(\{\bt_l\}^{N_{RF}}_{l=1})$ at every subcarrier. We note here that while we are focusing on precoding in this paper, a similar optimization problem can be induced to design receive combiners to deal with the beam squint effect at the receiver.     
     
      The constraints in \eqref{ct3}, \eqref{ct4}, and \eqref{ct5} and the coupling between $\bF_1$ and $\bF_{2,k}(\{\bt_{l}\}^{N_{RF}}_{l=1})$ in \eqref{obj1} make the problem difficult to solve. 
       Besides, \eqref{opt_prob} can be viewed as a matrix factorization problem under non-convex constraints, which has been also studied in wireless communications in the context of hybrid analog-digital precoding \cite{ayach2014,Hadi2016,Zhang2018,JunZhang2016,Sohrabi2017}. 
       A common approach was applying block coordinate descent (BCD) and relaxing the constraints \cite{ayach2014,Hadi2016,Zhang2018,JunZhang2016,Sohrabi2017} to deal with the non-convexity. 
       Unlike the prior approaches, we show in this work that the original non-convex problem in \eqref{opt_prob} can be readily converted into an equivalent convex problem.
     \begin{lemma}
     \label{lm2}
      For $x_0 \in \R$ and $y \in \R$, the following equality holds
        $\argmin\limits_{y:\mod(y,\pi) \neq x_0} |e^{jx_0} -e^{jy}|$ $= \argmin\limits_{y:\mod(y,\pi) \neq x_0}|x_0-y|$, where $\mod(y,\pi)$ is $y$ modulo $\pi$.  
    \end{lemma}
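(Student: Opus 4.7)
The plan is to reduce the chord distance $|e^{jx_0}-e^{jy}|$ to a strictly monotone function of the angular distance $|x_0-y|$ (on an appropriate fundamental domain) and then read off equality of the two argmin criteria. The key identity is
\begin{equation}
|e^{jx_0}-e^{jy}|^2 = (e^{jx_0}-e^{jy})\overline{(e^{jx_0}-e^{jy})} = 2 - 2\cos(x_0-y) = 4\sin^2\!\left(\frac{x_0-y}{2}\right),
\end{equation}
so minimizing $|e^{jx_0}-e^{jy}|$ is equivalent to minimizing $\bigl|\sin((x_0-y)/2)\bigr|$.

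Next, I would interpret the constraint $\mod(y,\pi)\neq x_0$ as excluding exactly the values $y = x_0 + k\pi$, $k\in\Z$. These are precisely the points where $\sin((x_0-y)/2)$ either vanishes (even $k$, giving a trivial minimizer $e^{jy}=e^{jx_0}$) or attains unit modulus (odd $k$, giving the antipodal maximal chord). Removing them ensures we work strictly inside a single monotone branch. Using the $2\pi$-periodicity of $y\mapsto e^{jy}$, every feasible $y$ admits a representative $\tilde y = y + 2\pi n$ with $|x_0-\tilde y|\in(0,\pi)$, and the objective is unchanged by this replacement. On the interval $(0,\pi)$, the map $u\mapsto |\sin(u/2)|$ is strictly increasing in $u$, so $\bigl|\sin((x_0-\tilde y)/2)\bigr|$ is a strictly increasing function of $|x_0-\tilde y|$. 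Hence the $y$ that minimizes the chord distance is the same one that minimizes $|x_0-y|$ among feasible representatives, which yields the claimed equality of argmins.

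The main obstacle is reconciling the $2\pi$-periodicity of the chord distance with the unbounded linear distance $|x_0-y|$: without the constraint, the two objectives are genuinely different globally, since $|x_0-y|$ grows without bound while $|e^{jx_0}-e^{jy}|$ cycles in $[0,2]$. The constraint $\mod(y,\pi)\neq x_0$ is what makes the statement well-posed, by removing precisely the boundary/sign-flip points of the fundamental domain on which the monotone correspondence between $\sin^2((x_0-y)/2)$ and $|x_0-y|$ is valid. Once that reduction is made explicit, the remaining monotonicity argument is elementary and the lemma follows directly from the strict monotonicity of $|\sin(\cdot)|$ on $(0,\pi/2)$.
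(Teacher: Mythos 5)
Your proof is correct and follows essentially the same route as the paper's: rewrite the chord distance as $2\bigl|\sin((x_0-y)/2)\bigr|$, restrict to a fundamental domain where $|x_0-y|\in(0,\pi)$, and invoke strict monotonicity of $\sin$ there. Your explicit discussion of the $2\pi$-periodicity and of what the constraint excludes simply spells out the ``without loss of generality'' reduction that the paper states in one line.
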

    \begin{proof}
    See Appendix \ref{FirstAppendix}. 
    \end{proof}
 
    Rewriting the objective function in \eqref{opt_prob} gives
    \begin{equation}
    \label{eq6}
     \frac{1}{N_t}\frac{1}{K}\sum_{k=1}^{K}\!\sum_{l=1}^{N_{RF}} \!\sum_{m = 1}^{M}\!\sum_{n=1}^{N} \bigg|e^{-j\pi \zeta_k \gamma^{(l)}_{n,m}} - e^{j \pi x^{(l)}_{n,m}} e^{-j\pi \zeta_k \vartheta^{(l)}_{m} }\bigg|^2\!\!\!, \!\!\!\!    
    \end{equation}
    and incorporating \emph{Lemma} \ref{lm2} into \eqref{eq6} converts \eqref{opt_prob} into the following equivalent problem:
     \begin{subequations}
        \label{opt_pro_2}
        \beq
       \d4\d4\!\! \min_{\{x^{(l)}_{n,m}\}, \{\vartheta^{(l)}_{m}\}}\d4&& \frac{1}{K} \!\sum\limits_{k=1}^{K}\!\!\sum\limits_{l=1}^{N_{RF}} \!\!\sum\limits_{m=1}^{M}\!\sum\limits_{n=1}^{N}\left|
          x^{(l)}_{n,m} \!\!-\! \zeta_k\vartheta^{(l)}_{m}\!\! +\! \zeta_k\gamma^{(l)}_{n,m}\right|^2\!\!\!\!, \label{obj2}\\       
       \d4\d4\d4  \text{subject to} \d4&& 0 \leq \vartheta^{(l)}_{m}  \leq \vartheta_{\max}, ~\forall  l, m.\label{ct6}
        \eeq
    \end{subequations}
    
     \noindent The PS and TTD variables in \eqref{opt_pro_2} are readily collected into a composite matrix $\bA_l = [\ba^{(l)}_{1},\dots,\ba^{(l)}_{M}] \in \R^{(N+1)\times M}$, where $\ba^{(l)}_{m} = [\bx^{(l) T}_m,\vartheta^{(l)}_{m}]^T \in \R^{N+1}$. Containing the unconstrained analog counterpart in \eqref{eq4} in a matrix  $\bB^{(l)}_k \in \R^{N \times M},$ where $B^{(l)}_k(n,m) = -\zeta_k \gamma^{(l)}_{n,m}$, $\forall k,l,n,m$, the problem \eqref{opt_pro_2} becomes 
    \begin{subequations}
    \label{opt_pro3}
    \beq
        \underset{\{\bA_l \}^{N_{RF}}_{l=1}}\min &&\frac{1}{K}\sum\limits_{k=1}^{K}\sum\limits_{l=1}^{N_{RF}} \left\|\bC_k\bA_{l} - \bB^{(l)}_k\right\|^2_F, \\
        \text{subject to}  
        &&\mathbf{0}^T_{M} \preceq \be^{T}_{N+1}\bA_{l} \preceq \vartheta_{\max}\mathbf{1}^T_{M}, ~\forall l, \label{eq:9b} 
        \eeq 
    \end{subequations}
    where $\bC_k = \begin{bmatrix}
     \bI_{N},-\zeta_k \mathbf{1}_{N}
     \end{bmatrix} \in \R^{N \times (N+1)}$ and $\be_{N+1} = [\mathbf{0}^T_N,1]^T \in \R^{N+1}$. The $\preceq$ is an entry-wise vector inequality. 
  
    By introducing $\bC \= \frac{1}{K}\sum_{k=1}^{K}\bC^T_k\bC_k \in \R^{(N+1)\times(N+1)}$ and $\bD_l \= \frac{1}{K}\sum_{k=1}^{K}\bC^T_k\bB^{(l)}_k \!\in\! \R^{(N+1)\times M}$, the problem \eqref{opt_pro3} is equivalently 
    \vspace{-0.2cm}
    \begin{subequations}
        \label{eq:lcqp}
    \beq
           \underset{\ba^{(l)}_m}\min ~ && \ba^{(l)T}_m\bC\ba^{(l)}_m -2\bd^{(l) T}_m\ba^{(l)}_m,\label{eq:10a}\\
        \text{subject to } 
        && 0 \leq \be_{N+1}^{T}\ba^{(l)}_m \leq \vartheta_{\max}, \forall l,m, 
    \eeq
    \end{subequations} where $\bd^{(l)}_m$ in \eqref{eq:10a} is the $m$th column of $\bD_l$.
    The problem \eqref{eq:lcqp} is solved globally and the solution to it is summarized below.
    \begin{theorem}
    \label{theorem1}
    The optimal solution $\ba^{(l)\star}_m = [\bx^{(l) \star T}_m,\vartheta^{(l)\star}_m]^T$ to \eqref{eq:lcqp} is given by 
    \begin{equation}
    \normalfont
    \label{eq:theorem1}
    x^{(l) \star}_{n,m} = 
    \ls \!\!\! \begin{array}{ll}
    \frac{N-2n+1}{2}\psi_{c,l},& \d4 \text{ if } 0 \leq \psi_{c,l} \leq \frac{4f_c t_{\max}}{(2m-1)N-1}, \\ 
     \vartheta_{\max} - \gamma^{(l)}_{n,m},&  \d4 \text{ otherwise}, \forall l,n,m,
    \end{array} \right.
    \end{equation}
    and $\vartheta^{(l)\star}_m = 2f_c t^{(l)\star}_m$, where the $t^{(l)\star}_m$ is    
    \begin{equation}
        \normalfont
    \label{eq:TTDvalues}
    t^{(l) \star}_m = 
    \ls \!\!\! \begin{array}{ll}
    \frac{(2m-1)N-1}{4f_c}\psi_{c,l}, \d4 &\text{ if } 0 \leq \psi_{c,l} \leq \frac{4f_c t_{\max}}{(2m-1)N-1},  \\  t_{\max}, \d4 &  \text{ otherwise},\forall l,m.
    \end{array} \right.
    \end{equation}
    \end{theorem}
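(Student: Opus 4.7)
The plan is to exploit the separability of \eqref{eq:lcqp} across the index pairs $(l,m)$ and, within each block, to use the fact that the only constraint binds a single coordinate $\vartheta^{(l)}_m$. Working from the elementwise form \eqref{opt_pro_2} is cleaner than the matrix form \eqref{eq:lcqp} because the decoupling across $n$, $l$, and $m$ is then manifest, and the problem reduces to a sequence of one-dimensional convex projections.

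First, I would fix $\vartheta^{(l)}_m$ and minimize over $x^{(l)}_{n,m}$ alone. Each $x^{(l)}_{n,m}$ enters only the inner sum $\sum_k\bigl(x^{(l)}_{n,m}-\zeta_k(\vartheta^{(l)}_m-\gamma^{(l)}_{n,m})\bigr)^2$, which is strictly convex and unconstrained. Zeroing the derivative yields $x^{(l)\star}_{n,m}=\bar\zeta\,(\vartheta^{(l)}_m-\gamma^{(l)}_{n,m})$ with $\bar\zeta=\frac{1}{K}\sum_k\zeta_k$; since $K$ is odd and the subcarrier grid $\{\zeta_k\}$ is symmetric about $1$, a short check gives $\bar\zeta=1$, so $x^{(l)\star}_{n,m}=\vartheta^{(l)}_m-\gamma^{(l)}_{n,m}$.

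Second, substituting back reduces the remaining problem in $\vartheta^{(l)}_m$ to a one-dimensional convex quadratic on $[0,\vartheta_{\max}]$: up to the positive constant $\sigma_\zeta^2=\frac{1}{K}\sum_k(\zeta_k-1)^2$, the reduced cost equals $\sum_{n=1}^N(\vartheta^{(l)}_m-\gamma^{(l)}_{n,m})^2$. Its unconstrained minimizer is the arithmetic mean $\tilde\vartheta^{(l)}_m=\frac{1}{N}\sum_n\gamma^{(l)}_{n,m}$, which using $\gamma^{(l)}_{n,m}=((m-1)N+n-1)\psi_{c,l}$ collapses to $\tilde\vartheta^{(l)}_m=\frac{(2m-1)N-1}{2}\psi_{c,l}$.

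Third, I would invoke the sign-invariance property of the array gain established earlier to assume $\psi_{c,l}\geq 0$. Then $\tilde\vartheta^{(l)}_m\geq 0$ automatically, so only the upper bound $\vartheta_{\max}$ can be active, and projection onto $[0,\vartheta_{\max}]$ gives $\vartheta^{(l)\star}_m=\min\bigl(\tilde\vartheta^{(l)}_m,\vartheta_{\max}\bigr)$. Rearranging $\tilde\vartheta^{(l)}_m\le\vartheta_{\max}$ recovers the branching condition $\psi_{c,l}\le\frac{4 f_c t_{\max}}{(2m-1)N-1}$; substituting each branch of $\vartheta^{(l)\star}_m$ into $x^{(l)\star}_{n,m}=\vartheta^{(l)\star}_m-\gamma^{(l)}_{n,m}$ and simplifying the unclipped branch to $\frac{N-2n+1}{2}\psi_{c,l}$ yields \eqref{eq:theorem1}, while $t^{(l)\star}_m=\vartheta^{(l)\star}_m/(2 f_c)$ yields \eqref{eq:TTDvalues}. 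The only mildly delicate step is the very first one, where OFDM symmetry is used to collapse $\bar\zeta$ to $1$ and thereby decouple $\bx^{(l)}_m$ from $\vartheta^{(l)}_m$ in the stationarity conditions; everything downstream is a routine one-variable convex projection and a sum-of-arithmetic-progression calculation, which explains why a global closed-form optimum is available despite the non-convex appearance of \eqref{opt_prob}.
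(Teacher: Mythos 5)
Your proposal is correct, and it reaches \emph{Theorem}~\ref{theorem1} by a genuinely different and more elementary route than the paper. The paper works directly with the constrained quadratic program \eqref{eq:lcqp}: it forms the Lagrangian, enumerates the KKT cases on the two multipliers, computes $\bC^{-1}$ explicitly in closed form, and evaluates the scalars $\be^T_{N+1}\bC^{-1}\be_{N+1}=1/\eta$ and $\be^T_{N+1}\bC^{-1}\bd^{(l)}_m=\frac{(2m-1)N-1}{2}\psi_{c,l}$ to resolve which constraint is active. You instead eliminate the unconstrained block $\{x^{(l)}_{n,m}\}$ first --- using the symmetry of the subcarrier grid to get $\frac{1}{K}\sum_k\zeta_k=1$ and hence $x^{(l)\star}_{n,m}=\vartheta^{(l)}_m-\gamma^{(l)}_{n,m}$ --- and reduce the whole problem to projecting the scalar mean $\frac{(2m-1)N-1}{2}\psi_{c,l}$ onto $[0,\vartheta_{\max}]$. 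This is in effect the block elimination (Schur complement) that the paper's explicit $\bC^{-1}$ encodes, but your version avoids both the matrix inversion and the Lagrangian machinery, and it makes transparent \emph{why} the answer is a clipped arithmetic mean of the $\gamma^{(l)}_{n,m}$. Both arguments use the sign-invariance reduction to $\psi_{c,l}\geq 0$ in the same place, namely to rule out the lower bound being active. The one step you should make explicit in a write-up is the justification for sequential minimization, $\min_{\vartheta,\,\bx}f=\min_{\vartheta}\min_{\bx}f$, which is valid here precisely because the constraint \eqref{ct6} involves only $\vartheta^{(l)}_m$; with that noted, the proof is complete and yields exactly \eqref{eq:theorem1} and \eqref{eq:TTDvalues}.
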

    
    \begin{proof}
     See Appendix \ref{SecondAppendix}. 
    \end{proof}
    \begin{remark}
    \label{rmk2}
    The PS and TTD solutions in \eqref{eq:theorem1} and \eqref{eq:TTDvalues} are different from those in prior works. For instance, the PS values in \cite{tan2019} were not optimized and given by $x^{(l)}_{n,m} = -(n-1)\psi_{c,l}$, $\forall n$, which is distinguishable from \eqref{eq:theorem1}. The time delay value of the $m$th TTD was $t^{(l)}_{m} = m\frac{N\psi_{c,l}}{2f_c}$ in \cite{tan2019}; as $m$ increases, the $t^{(l)}_m$ could be greater than $t_{\max}$, in which case such $t^{(l)}_m$ needs to be knocked down to the $t_{\max}$, resulting in performance deterioration. On the other hand, as will be discussed in \emph{Section}~\ref{sec6} in detail, when all TTD values are smaller than $t_{\max}$, the approaches in \cite{tan2019,Gao2021} achieve the same array gain performance as the proposed approach, meaning that the designs in \cite{tan2019,Gao2021} is a special case of \emph{Theorem}~\ref{theorem1}. 
    \end{remark}
    
    Focusing on the large $t_{\max}$, the TTD value in \eqref{eq:TTDvalues} becomes $t^{(l) \star}_m = \frac{((2m-1)N-1)\psi_{c,l}}{4f_c},$ which is proportional to the delay increase at every subcarrier and thereby, best compensating the beam squint. On the basis of this observation, we obtain selection criteria (rule of thumb) on the required number of transmit antennas and value of maximum time delay based on \eqref{eq:TTDvalues} as follows.

     \emph{$N_t$ Selection Criterion:} Given $t_{\max}$ value determined by the employed TTD devices, choose $N_t$ such that 
     \begin{equation}
     \label{eq:criterion1}
         N_t \leq \frac{M}{2m-1} + \frac{4Mf_c t_{\max}}{(2m-1)\psi_{c,l}}, \forall l, m.
     \end{equation}
    
    \emph{$t_{\max}$ Selection Criterion:} Equivalently, given $N_t$ value determined by the transmission array, the TTD device should be chosen to satisfy 
    \begin{equation}
    \label{eq:criterion2}
        t_{\max} \geq \psi_{c,l}\frac{(2m-1)N_t-M}{4Mf_c}, \forall l, m.
    \end{equation}
    These bounds are once again approximations, but they give insight into the values of $N_t$ and $t_{\max}$ required. For example, these bounds depend on a random variable $\psi_{c,l}$, which can be characterized by the angle spread statistics $\psi_{c,l} \in [\psi_{\min},\psi_{\max}]$ of the deployed array antenna systems, leading to sufficient conditions for \eqref{eq:criterion1} and \eqref{eq:criterion2} by replacing $\psi_{c,l}$ with $\psi_{\max}$. 
    \section{Simulations}
    \label{sec6}
    In this section, we conduct numerical simulations to evaluate the average array gain performance of the proposed joint PS and TTD precoding and verify our analysis in \emph{Section}~\ref{sec4}. 
    \textcolor{red}{}
    The average array gain of the $l$th path is computed as $\frac{1}{K}\sum_{k=1}^K g(\bg^{(l)}_k,\psi_{c,l})$. For the numerical simulation, we consider $B = 30$ GHz, $f_c = 300$ GHz, $K = 129$, and $\psi_{c,l} =0.8$, and the number of TTDs to be $M = 16$.       
    \label{sectionIV}
     \begin{figure}[h]
        \centering
        \includegraphics[width = 0.4\textwidth]{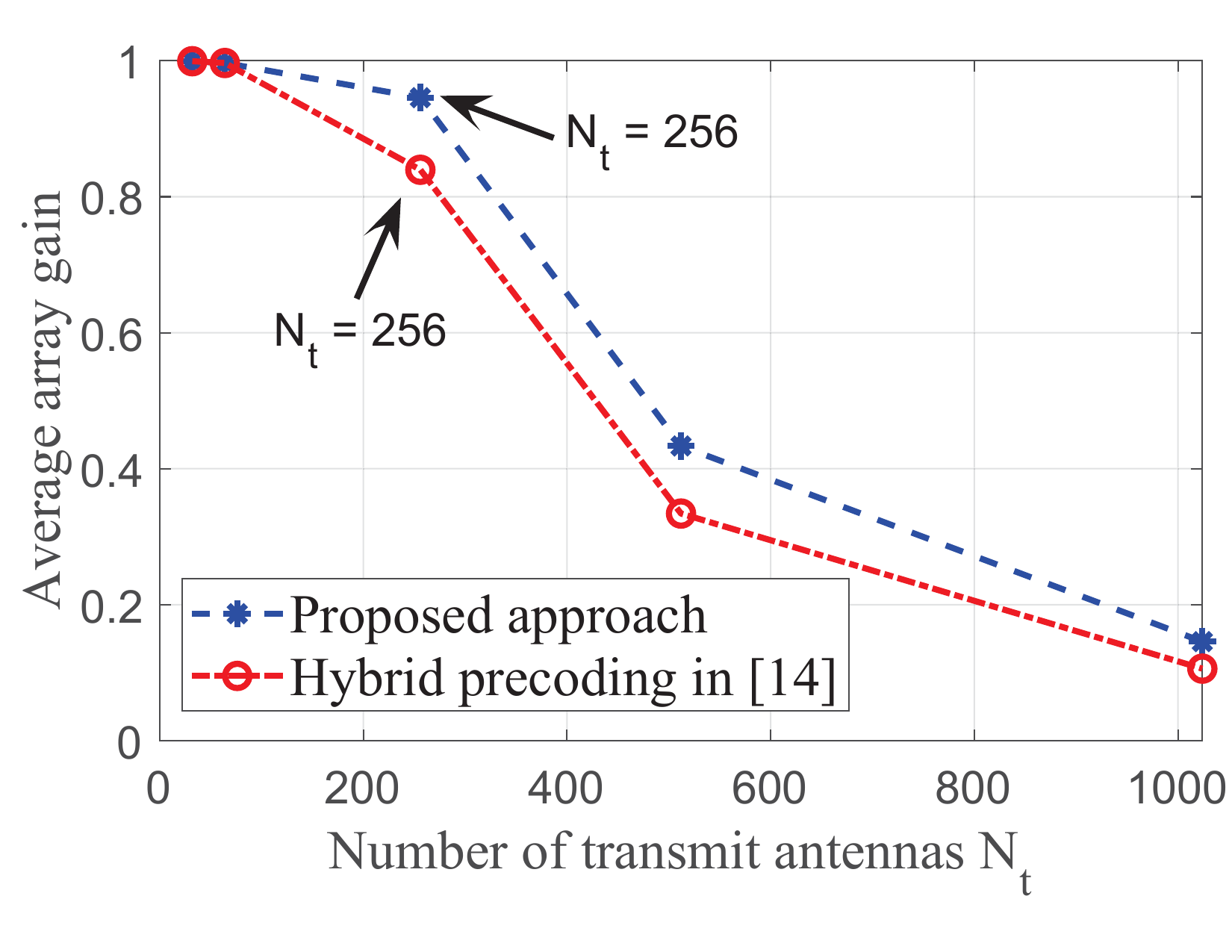}
        \caption{The average array gain vs. $N_t$ for $t_{\max} = 340$ ps.}
        \label{fig7}
    \end{figure}
    
     Fig.~\ref{fig7} displays the average array gain curves of the proposed approach and the prior approach in \cite{tan2019} for the number of antennas $N_t \in \{32,64,256,512,1024\}$ and $t_{\max} = 340$ ps. When $N_t = 32$ or $64$, both methods obtain the same near-optimal array gain performance, because the designed time delay values in both approaches are smaller than $340$ ps. However, when $N_t$ increases further, e.g., $N_t \geq 256$, both approaches suffer from array gain loss as shown in Fig. 3. Nevertheless, the proposed approach reveals a rather reliable performance than the benchmark because some TTD values produced by the prior approach are larger than the $t_{\max}$ and this increases the array gain loss. 
     \begin{figure}[h]
        \centering
        \includegraphics[width = 0.4\textwidth]{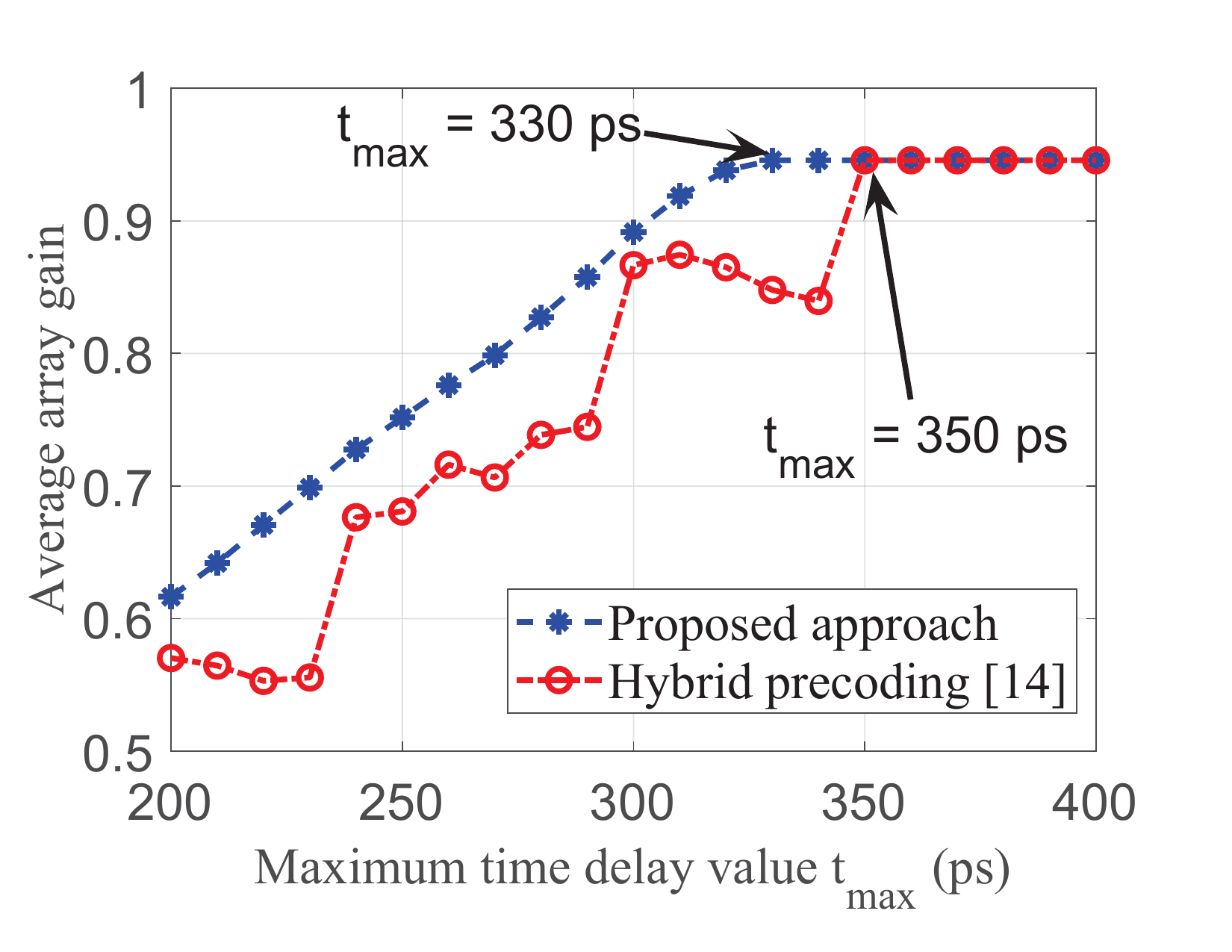}
        \caption{The average array gain vs. $t_{\max}$ for $N_t = 256$.}
    \label{fig6}  
    \end{figure}
    
    In Fig.~\ref{fig6}, the average array gain performance of the proposed approach is compared with the prior approach in \cite{tan2019}, in which we increase the $t_{\max}$  values from $200$ ps to $400$ ps while fixing $N_t=256$. 
     When $t_{\max} \in [200, 340]$ ps, as seen from the figure, the proposed approach reveals consistent performance increase while the benchmark \cite{tan2019} shows inconsistency because of the infeasible TTD values designed. When $ t_{\max} \geq 330$ ps, the proposed approach steadily converges to the average array gain performance around 94\%. Meanwhile, the benchmark achieves the same performance when $t_{\max} \geq 350$ ps. 
     
     Overall, the trends in Figs.~\ref{fig7}-\ref{fig6} reaffirm \emph{Remark}~\ref{rmk2} stating that the prior design \cite{tan2019} is a special case of the proposed approach. Figs.~\ref{fig7}-\ref{fig6} also reveal the intuition behind the selection criteria in \eqref{eq:criterion1} and \eqref{eq:criterion2}, in which $N_t \leq 256$ and $t_{\max} \geq 330$ ps are chosen to meet the criteria, respectively.  
    
    \section{Conclusion}
    A TTD-based hybrid precoding approach was proposed to compensate the beam squint effect in the  wideband THz massive MIMO-OFDM system by jointly optimizing the PS and TTD precoders under the per-TTD device time delay {constraints}. 
    The joint optimization problem was initially formulated in the context of minimizing the distance between the {optimal unconstrained analog precoder} and the product of the PS and TTD precoders.
   By transforming the original problem to a tractable phase domain, we identified the global optimal solution. 
   Based on the closed-form expression of our solution, we proposed the selection criteria for the required number of transmit antennas and value of maximum time delay. 
   Through analysis and simulations, we affirmed that our proposed design is a generalization of the prior TTD precoding approaches. 
    \appendices
  \section{Proof of lemma 2}
  \label{FirstAppendix}
      We assume that $0<|x_0  \!-\!  y|<\pi$ without loss of generality. 
      Then,  
        $\argmin_{0 <|x_0-y| <\pi}|e^{jx_0}\-e^{jy}|$ $=$ $\underset{0 <|x_0-y|<\pi}{\argmin}\Big|\sin\left(\frac{x_0-y}{2}\right)\Big|$ $=$ $\underset{0 <|x_0-y| <\pi}{\argmin}\sin \Big(\Big|\frac{x_0-y}{2}\Big| \Big)$ $= \underset{0 <|x_0-y| <\pi}{\argmin}$ $|x_0-y|$, 
    where the last step follows from the fact that $\sin(\cdot)$  is a strictly increasing function in $(0,\frac{\pi}{2})$. This completes the proof. 
    \section{Proof of theorem 1}
    \label{SecondAppendix}
     To show \emph{Theorem}~\ref{theorem1}, we need to first provide closed-form expressions of $\be^T_{N+1}\!\bC^{-1}\!\be_{N+1}$ and $\be_{N+1}^T\!\bC^{-1}\bd^{(l)}_{m}$, which will be later used  in this proof. To this end, we first claim that    
    \begin{subequations}
     \label{eq:KKT1}
    \beq
        \be^T_{N+1}\bC^{-1}\be_{N+1} \d4&=&\d4 \frac{1}{\eta}, \label{eq:16a} \\  \be_{N+1}^T\bC^{-1}\bd^{(l)}_{m} \d4&=&\d4 \frac{(2m-1)N-1}{2}\psi_{c,l} \label{eq:16b},
    \eeq
    \end{subequations}
    where $\eta$$=$$ \frac{NB^2}{f^2_c}\frac{(K^2-1)}{12K^2}.$
    We first prove \eqref{eq:16a}. Given $\bC_k$ in \eqref{opt_pro3}, we have 
    $\bC^T_k\bC_k \!=\! \begin{bsmallmatrix} 
     \bI_{N}\! &\! -\zeta_k \mathbf{1}_{N}  \\-\zeta_k \mathbf{1}^T_{N} \!& \!N\zeta^2_k
    \end{bsmallmatrix}$, and . After some algebraic manipulations, it is readily verified that
     $\bC =
        \begin{bsmallmatrix}
        \bI_{N} & -\mathbf{1}_{N} \\
        -\mathbf{1}^T_{N} & \Gamma \end{bsmallmatrix}$ and $\bC^{-1} =
        \begin{bsmallmatrix}
        (\bI_{N} + \frac{1}{\eta}\mathbf{1}_{N}\mathbf{1}^T_N) & \frac{1}{\eta}\mathbf{1}_{N} \\
        \frac{1}{\eta}\mathbf{1}^T_{N} & \frac{1}{\eta} \end{bsmallmatrix}$, where $\Gamma = N+\eta$. Now, it is straightforward to conclude $\be_{N+1}^T\bC^{-1}\be_{N+1} = \frac{1}{\eta}$ and $\be_{N+1}^T\bC^{-1}\bd^{(l)}_{m} = \frac{\mathbf{1}^T_{N+1}\bd^{(l)}_{m}}{\eta}.$
    From the definition of $\bd^{(l)}_m$ in \eqref{eq:lcqp}, we obtain $\bd^{(l)}_{m} \= \frac{1}{K}\sum_{k=1}^{K} \bC^T_k\bb^{(l)}_{k,m}$, where $\bb^{(l)}_{k,m} \= [-\zeta_k\gamma^{(l)}_{n,m},\dots,-\zeta_k\gamma^{(l)}_{N,m}]^T$ is the $m$th column of $\bB^{(l)}_k$ in \eqref{opt_pro3}. Since $\bd^{(l)}_{m} \= \frac{1}{K}\sum_{k=1}^{K}[\bI_{N},-\zeta_k\mathbf{1}^T_{N}]^T \bb^{(l)}_{k,m} \= \frac{1}{K}\sum_{k=1}^{K}[\bb^{(l)}_{k,m}, -\zeta_k\mathbf{1}^T_{N}\bb^{(l)}_{k,m}]^T,$
    we have $\be_{N+1}^T\bC^{-1}\bd^{(l)}_{m} \= \frac{\mathbf{1}^T_{N+1}\bd^{(l)}_{m}}{\eta} \=  \frac{(2m-1)N-1}{2}\psi_{c,l},$ completing the proof of \eqref{eq:16b}. 
    
    Now, we are ready to prove \emph{Theorem}~\ref{theorem1}. We start by formulating the Lagrangian of  \eqref{eq:lcqp},
        \begin{multline}
        \label{eq:Lagrangian}
        \cL(\ba^{(l)}_m,\lambda_1, \lambda_2) =  \ba^{(l)T}_m\bC\ba^{(l)}_m -2\bd^{(l) T}_m\ba^{(l)}_m + \\  \lambda_1(\be_{N+1}^T\ba^{(l)}_m -  \vartheta_{\max}) +  \lambda_2(-\be_{N+1}^T\ba^{(l)}_m),
    \end{multline}
     where $\lambda_1 \geq 0$ and $\lambda_2 \geq 0$ are the Largrangian multipliers.
    After incorporating the first order necessary condition for $\ba_m^{(l)}$ in \eqref{eq:Lagrangian},  the KKT conditions of \eqref{eq:lcqp} are given by: 
    \begin{equation}
         \left\{ \begin{array}{ll}
        \label{KKT}
        2\bC \ba^{(l)}_m -2\bd^{(l)}_m + \lambda_1 \be_{N+1} - \lambda_2\be_{N+1} & = 0,  \\
         \lambda_1(\be_{N+1}^T\ba^{(l)}_m\! - \! \vartheta_{\max}) &= 0, \\ 
         \lambda_2(-\be_{N+1}^T\ba^{(l)}_m) &= 0,\\
        \lambda_1 \geq 0, \lambda_2 \geq 0. 
        \end{array} \right.
    \end{equation}  
    It is obvious that when $\lambda_1 > 0$ and $\lambda_2 >0$, there is no solution.   
    When $\lambda_1 = \lambda_2 = 0$, solving \eqref{KKT} yields $\ba^{(l)}_{m} = \bC^{-1}\bd^{(l)}_m$ for $0 \leq \be_{N+1}^T\bC^{-1}\bd^{(l)}_m \leq \vartheta_{\max}$. 
    When $\lambda_1 > 0$ and $\lambda_2 = 0$, \eqref{KKT} yields $\ba^{(l)}_m = \bC^{-1}(\bd^{(l)}_{m} - \frac{1}{2}\lambda_1 \be_{N+1})$ for $ \be_{N+1}^T\bC^{-1}\bb^{(l)}_{m} > \vartheta_{\max}$. 
    However, when $\lambda_1 = 0$ and $\lambda_2 >0$, we obtain $\ba^{(l)}_m = \bC^{-1}(\bd^{(l)}_m + \frac{1}{2} \lambda_2\be_{N+1})$ for $\be_{N+1}^{T}\bC^{-1}\bd^{(l)}_m < 0$, which contradicts with \eqref{eq:KKT1}, because from the sign invariance property of array gain, we have $\frac{(2m-1)N-1}{2}\psi_{c,l} \geq 0$, $\forall \psi_{c,l} \geq 0.$ In short, solving \eqref{KKT} leads to
    \begin{equation}
    \label{eq:closed_form_KKT}
    \normalfont
    \ba^{(l) \star}_{m} \!\!=\!\! 
    \ls \!\!\! \begin{array}{ll}
    \bC^{-1}\bd^{(l)}_m, \d4\d4& \text{ if } 0 \leq \be^T_{N+1}\bC^{-1}\bd^{(l)}_m \leq \vartheta_{\max}, \\ 
    \bC^{-1}(\bd^{(l)}_{m}\! -\frac{1}{2}\! \lambda_1 \be_{N+1}),\d4\!\!\!&  \text{ if } \be^T_{N+1}\bC^{-1}\bd^{(l)}_m > \vartheta_{\max},
    \end{array} \right.
    \end{equation}
    where $\lambda_1 = \frac{\be^{T}_{N+1}\bC^{-1}\bd^{(l)}_m - \vartheta_{\max}}{\be^{T}_{N+1}\bC^{-1}\be_{N+1}}$.    
    Substituting \eqref{eq:KKT1} into \eqref{eq:closed_form_KKT} results in the closed-form solution of \eqref{eq:lcqp} as in \eqref{eq:theorem1} and \eqref{eq:TTDvalues}, which concludes the proof.  
\bibliographystyle{IEEEtran} 
\bibliography{biblib}    

\begin{thebibliography}{10}
\providecommand{\url}[1]{#1}
\csname url@samestyle\endcsname
\providecommand{\newblock}{\relax}
\providecommand{\bibinfo}[2]{#2}
\providecommand{\BIBentrySTDinterwordspacing}{\spaceskip=0pt\relax}
\providecommand{\BIBentryALTinterwordstretchfactor}{4}
\providecommand{\BIBentryALTinterwordspacing}{\spaceskip=\fontdimen2\font plus
\BIBentryALTinterwordstretchfactor\fontdimen3\font minus
  \fontdimen4\font\relax}
\providecommand{\BIBforeignlanguage}[2]{{%
\expandafter\ifx\csname l@#1\endcsname\relax
\typeout{** WARNING: IEEEtran.bst: No hyphenation pattern has been}%
\typeout{** loaded for the language `#1'. Using the pattern for}%
\typeout{** the default language instead.}%
\else
\language=\csname l@#1\endcsname
\fi
#2}}
\providecommand{\BIBdecl}{\relax}
\BIBdecl

\bibitem{Han2014}
I.~F. Akyildiz, J.~M. Jornet, and C.~Han, ``Terahertz band: Next frontier for
  wireless communications,'' \emph{Physical Communication}, vol.~12, pp.
  16--32, 2014.

\bibitem{Marzetta2010}
T.~L. Marzetta, ``Noncooperative cellular wireless with unlimited numbers of
  base station antennas,'' \emph{IEEE Transactions on Wireless Communications},
  vol.~9, no.~11, pp. 3590--3600, 2010.

\bibitem{Rusek2013}
F.~Rusek, D.~Persson, B.~K. Lau, E.~G. Larsson, T.~L. Marzetta, O.~Edfors, and
  F.~Tufvesson, ``Scaling up {{MIMO}}: Opportunities and challenges with very
  large arrays,'' \emph{IEEE Signal Processing Magazine}, vol.~30, no.~1, pp.
  40--60, 2013.

\bibitem{Larsson14}
E.~G. Larsson, O.~Edfors, F.~Tufvesson, and T.~L. Marzetta, ``Massive {{MIMO}}
  for next generation wireless systems,'' \emph{IEEE Communications Magazine},
  vol.~52, no.~2, pp. 186--195, 2014.

\bibitem{ayach2014}
O.~E. Ayach, S.~Rajagopal, S.~Abu-Surra, Z.~Pi, and R.~W. Heath, ``Spatially
  sparse precoding in millimeter wave {{MIMO}} systems,'' \emph{IEEE
  Transactions on Wireless Communications}, vol.~13, no.~3, pp. 1499--1513,
  2014.

\bibitem{Alkhateeb2014}
A.~Alkhateeb, O.~El~Ayach, G.~Leus, and R.~W. Heath, ``Channel estimation and
  hybrid precoding for millimeter wave cellular systems,'' \emph{IEEE Journal
  of Selected Topics in Signal Processing}, vol.~8, no.~5, pp. 831--846, 2014.

\bibitem{Hadi2016}
H.~Ghauch, T.~Kim, M.~Bengtsson, and M.~Skoglund, ``Subspace estimation and
  decomposition for large millimeter-wave {MIMO} systems,'' \emph{IEEE Journal
  of Selected Topics in Signal Processing}, vol.~10, no.~3, pp. 528--542, 2016.

\bibitem{Han2021}
C.~Han, L.~Yan, and J.~Yuan, ``Hybrid beamforming for terahertz wireless
  communications: Challenges, architectures, and open problems,'' \emph{IEEE
  Wireless Communications}, vol.~28, no.~4, pp. 198--204, 2021.

\bibitem{Cai2016}
M.~Cai, K.~Gao, D.~Nie, B.~Hochwald, J.~N. Laneman, H.~Huang, and K.~Liu,
  ``Effect of wideband beam squint on codebook design in phased-array wireless
  systems,'' in \emph{2016 IEEE Global Communications Conference (GLOBECOM)},
  2016, pp. 1--6.

\bibitem{Wang2019}
B.~Wang, M.~Jian, F.~Gao, G.~Y. Li, and H.~Lin, ``Beam squint and channel
  estimation for wideband mmwave massive {{MIMO}}-{OFDM} systems,'' \emph{IEEE
  Transactions on Signal Processing}, vol.~67, no.~23, pp. 5893--5908, 2019.

\bibitem{Liu2019}
X.~Liu and D.~Qiao, ``Space-time block coding-based beamforming for beam squint
  compensation,'' \emph{IEEE Wireless Communications Letters}, vol.~8, no.~1,
  pp. 241--244, 2019.

\bibitem{Mailloux2017}
R.~Mailloux, \emph{Phased Array Antenna Handbook, Third Edition}, 2017.

\bibitem{Longbrake2012}
M.~Longbrake, ``True time-delay beamsteering for radar,'' in \emph{2012 IEEE
  National Aerospace and Electronics Conference (NAECON)}, 2012, pp. 246--249.

\bibitem{tan2019}
J.~Tan and L.~Dai, ``Delay-phase precoding for {THz} massive {{MIMO}} with beam
  split,'' in \emph{2019 IEEE Global Communications Conference (GLOBECOM)},
  2019, pp. 1--6.

\bibitem{Gao2021}
F.~Gao, B.~Wang, C.~Xing, J.~An, and G.~Y. Li, ``Wideband beamforming for
  hybrid massive {{MIMO}} terahertz communications,'' \emph{IEEE Journal on
  Selected Areas in Communications}, vol.~39, no.~6, pp. 1725--1740, 2021.

\bibitem{Matthaiou2021}
K.~Dovelos, M.~Matthaiou, H.~Q. Ngo, and B.~Bellalta, ``Channel estimation and
  hybrid combining for wideband terahertz massive {{MIMO}} systems,''
  \emph{IEEE Journal on Selected Areas in Communications}, vol.~39, no.~6, pp.
  1604--1620, 2021.

\bibitem{Cho2018}
M.-K. Cho, I.~Song, and J.~D. Cressler, ``A true time delay-based sige
  bi-directional {T/R} chipset for large-scale wideband timed array antennas,''
  in \emph{2018 IEEE Radio Frequency Integrated Circuits Symposium (RFIC)},
  2018, pp. 272--275.

\bibitem{bracewell2000}
R.~Bracewell, \emph{The Fourier Transform and its Applications}, 3rd~ed.\hskip
  1em plus 0.5em minus 0.4em\relax McGraw-Hill, New York, 2000.

\bibitem{Zhang2018}
W.~Zhang, T.~Kim, D.~J. Love, and E.~Perrins, ``Leveraging the restricted
  isometry property: Improved low-rank subspace decomposition for hybrid
  millimeter-wave systems,'' \emph{IEEE Transactions on Communications},
  vol.~66, no.~11, pp. 5814--5827, 2018.

\bibitem{JunZhang2016}
X.~Yu, J.~Zhang, and K.~B. Letaief, ``Alternating minimization for hybrid
  precoding in multiuser {OFDM} {mmWave} systems,'' in \emph{2016 50th Asilomar
  Conference on Signals, Systems and Computers}, 2016, pp. 281--285.

\bibitem{Sohrabi2017}
F.~Sohrabi and W.~Yu, ``Hybrid analog and digital beamforming for {mmWave}
  {OFDM} large-scale antenna arrays,'' \emph{IEEE Journal on Selected Areas in
  Communications}, vol.~35, no.~7, pp. 1432--1443, 2017.

\end{thebibliography}

\end{document}